\newcommand\refTheorem[1]{\hyperref[#1]{Theorem~\ref*{#1}}}
\newcommand\refLemma[1]{\hyperref[#1]{Lemma~\ref*{#1}}}
\newcommand\refCorollary[1]{\hyperref[#1]{Corollary~\ref*{#1}}}
\newcommand\refFigure[1]{\hyperref[#1]{Figure~\ref*{#1}}}
\newcommand\refAlgorithm[1]{\hyperref[#1]{Algorithm~\ref*{#1}}}
\newcommand\refEquation[1]{\hyperref[#1]{(\ref*{#1})}}
\newtheorem{observation}{Observation}
\newcommand{\np}{\mbox{{\normalfont NP}}}
\newcommand{\ppad}{\mbox{{\normalfont PPAD}}}
\newcommand{\optp}{\mbox{{\normalfont OptP}}}
\newcommand{\pspace}{\mbox{{\normalfont PSPACE}}}
\newcommand{\eotl}{{\mbox{{\sc End of line}}}}
\newcommand{\sperner}{{\mbox{{\sc Sperner}}}}
\newcommand{\twodb}{{\mbox{{\sc 2D-discrete-Brouwer}}}}
\newcommand{\threedb}{{\mbox{{\sc 3D-discrete Brouwer}}}}
\newcommand{\qbf}{{\mbox{{\sc QBF}}}}
\newcommand{\hide}[1]{}
\newcommand{\alg}{\ensuremath{\mathcal A}}
\begin{document}
\title{The Complexity of the Path-following Solutions of Two-dimensional Sperner/Brouwer Functions}




\author{Paul W. Goldberg\thanks{Supported by EPSRC under grant EP/K01000X/1}}



\institute{Department of Computer Science\\
University of Oxford\\
 \email{paul.goldberg@cs.ox.ac.uk}}


\maketitle

\begin{abstract}
There are a number of results saying that for certain ``path-following'' algorithms that
solve \ppad-complete problems, the solution obtained by the algorithm is \pspace-complete
to compute. We conjecture that these results are special cases of a much more
general principle, that all such algorithms compute \pspace-complete solutions.
Such a general result might shed new light on the complexity class \ppad.

In this paper we present a new \pspace-completeness result for an interesting
challenge instance for this conjecture.
Chen and Deng~\cite{CD} showed that it is \ppad-complete to find a trichromatic
triangle in a concisely-represented Sperner triangulation.
The proof of Sperner's lemma --- that such a solution always exists --- identifies one solution
in particular, that is found via  a natural ``path-following'' approach.
Here we show that it is \pspace-complete to compute this specific solution,
together with a similar result for the computation of the path-following solution of
two-dimensional discrete Brouwer functions.
\end{abstract}

\section{Preliminaries}

We begin with the definitions needed for the result presented here,
then we mention some related work and motivation.
We consider search problems where any instance $I$ has a set $S_I$ of
associated solutions represented by bit strings of length bounded by some
polynomial in the length of $I$; the challenge is to find an element of $S_I$.
A search problem is in \np\ if there is a polynomial-time algorithm 
that can check, given $I$ and a bit string $s$, whether $s$ belongs to $S_I$.
A search problem is {\em total} if for all $I$, $S_I$ is non-empty.

\begin{definition}\label{def:reduction}
Let $X$ and $Y$ be two \np\ search problems.
A {\em reduction $(f,g)$ from $X$ to $Y$} consists of polynomial-time computable
functions $f$ and $g$, where $f$ maps  instances of $X$ to instances of $Y$,
and $g$ maps solutions of $Y$ to solutions of $X$, such that $g(f(I))$ is a solution of $I$.
\end{definition}

Papadimitriou~\cite{Pap94} introduced the complexity class \ppad,
defined in terms of the problem \eotl\ (Definition~\ref{def:eotl}).
A search problem $X$ belongs to \ppad\ if there exists a polynomial-time
reduction from $X$ to \eotl, and $X$ is \ppad-complete if $X$ belongs to \ppad,
and \eotl\ is reducible to $X$.
\eotl\ is an \np\ total search problem that appears to be hard.

\begin{definition}\label{def:eotl}
An instance of \eotl\ consists of two
directed boolean circuits $P$, $S$, each having $n$ input nodes and $n$
output nodes, and each of size polynomial in $n$. Consider an associated
directed graph $G(P,S)$ on the $2^n$ bit strings, obtained by including arc
$(u,v)$ whenever $S(u)=v$ and $P(v)=u$. The exception to this rule is
that the all-zeroes bit string {\bf 0} is deemed to have no incoming edge.

The problem is to search for a bit string, other than {\bf 0}, that belongs
to exactly one arc (either as a source or a sink).
\end{definition}

Since, by construction, vertices of $G(P,S)$ have at most one incoming arc and at
most one outgoing arc, there is an ``obvious'' exponential-time algorithm
that finds a solution to \eotl\ as follows.
Starting at the vertex {\bf 0}, compute $S({\bf 0})$, $S(S({\bf 0}))$,
$S(S(S({\bf 0})))$, $\cdots$, until a vertex is found that has no outgoing arc.
We call this the path-following approach.
The other natural exponential-time algorithm is lexicographic search.
This paper aims to develop a deeper understanding of the properties
of path-following solutions, and how they contrast with lexicographic
search solutions. Notice that:
\begin{itemize}
\item \eotl\ is a {\em total} search problem: any instance $(P,S)$ has a solution,
and the proof of existence identified the path-following solution.
\item Moreover, a solution to an instance $(P,S)$ is allowed to be {\em any} degree-1
vertex (other than {\bf 0}); as a consequence, \eotl\ is a search problem that belongs to \np\
(it is easy to use $P$ and $S$ to check that a given solution is valid).
\item Although the unique path-following solution can be easily verified as a valid solution,
its status as the path-following solution cannot be efficiently verified.
\end{itemize}

We continue with the definitions of the \ppad-complete problems referred to in the title,
using essentially the same definitions given in~\cite{CD} (although we prefer to describe
them in terms of boolean circuits rather than Turing machines);
we will see that these problems also have natural path-following algorithms.

\begin{definition}
An instance of \twodb\ consists of a directed Boolean circuit $C$ having $2n$
input nodes and 2 output nodes; $C$ is of size polynomial in $n$. We assume that
the values taken by the output nodes represent three colour values $\{0,1,2\}$
(for example, by taking $00$ and $01$ to represent $0$, $10$ to represent 1,
and $11$ to represent $2$). $C$ represents a colouring of the integer grid points
$(x,y)$, $0\leq x,y < 2^n$ as follows.
The input nodes represent the coordinates of point $(x,y)$,
and the outputs represent its colour.
We impose the following {\em boundary condition}, that all
points $(0,y)$ must have colour 1, points $(x,0)$ other than the origin have colour 2,
and other points of the form $(x,2^n-1)$ or $(2^n-1,y)$ have colour 0.

The problem is to search for a unit square whose vertices contain points with all 3 colours.
\end{definition}

\twodb\ has an algorithm (given in detail in Definition~\ref{def:brouweralgo})  that can be described informally as follows.
Notice that a discrete Brouwer function divides the grid into a number of coloured regions,
where regions are connected components of points having the same colour, that can be reached
from each other via unit-length edges, or diagonal edges connecting $(x,y)$ with $(x+1,y-1)$.
We seek a point where all three colours are close together.
The boundary condition identifies a known point where regions having colours 1 and 2
are adjacent. Suppose that we trace the boundary of these two regions; then we either encounter
the third colour, or we exit the rectangle. However, the boundary condition says that there is
only one boundary point where colours 1 and 2 are adjacent, so in fact we encounter the
third colour.

Figure~\ref{fig:redex} (left-hand side) shows an example of a simple discrete Brouwer
function, together with the path followed by the above algorithm.

\begin{definition}
An instance of \sperner\ consists of a colouring of the set of non-negative
integer grid points $(x,y)$ satisfying $x+y<2^n$.
The colouring is specified using a circuit $C$ of size polynomial in $n$,
taking as input the coordinates $(x,y)$ and outputting a colour in $\{0,1,2\}$,
and satisfying the boundary conditions that no point with $x=0$ gets
colour $0$, no point with $y=0$ gets colour $1$, and no point with
$x+y=2^n-1$ gets colour 2. Consider the triangulation of the grid obtained
by connecting pairs of points in this domain that are distance 1 apart,
also all pairs of points $\{(x,y),(x-1,y+1)\}$.

The problem is to search for a ``trichromatic triangle'', that is, one of these triangles having 3
distinct colours at its vertices.
\end{definition}

Next, we review a straightforward reduction from \twodb\ to \sperner;
the reduction serves to identify some of the notation we use.
The reduction is essentially a slight simplification of one presented in~\cite{CD};
we will use it later to show that the path-following solution of
\sperner\ (obtained by the algorithm of Definition~\ref{def:sperneralgo})
is also \pspace-complete to compute.

\paragraph{Reducing \twodb\ to \sperner.}{
Let $I$ be an instance of \twodb\ with size parameter value $n$.
$f$ constructs an instance $I'$ of \sperner\ with size parameter value $n+2$,
where every point $(x,y)$ in $I$ has 4 corresponding points in $I'$ coloured as follows
(Figure~\ref{fig:redex} shows an example):
\begin{itemize}
\item{In $I'$, $(2x,2y)$, $(2x+1,2y)$, and $(2x,2y+1)$ receive the same colour as point $(x,y)$ in $I$.
Point $(2x-1,2y-1)$ receives the colour of $(x,y)$ in $I$, for all $x,y>0$.}
\item Points on the boundary obey the boundary condition of \sperner\ as follows:
the origin is coloured 2; all points $(0,y)$ with $y>0$ are coloured 1;
all points $(x,0)$ with $x<2^{n+1}$ are coloured 2; with $x\geq 2^{n+1}$ are coloured 0.
\item Remaining points are coloured $0$.
\end{itemize}

$g$ takes the triangle $t$ that constitutes a solution of $I'$,
and outputs the square in $I$ that, if all its coordinates were doubled,  contains $t$.
}

\begin{figure}
{\centering
\begin{tikzpicture}[scale=0.5]


\draw[step = 2,black] (0,0) grid (6,6);
\foreach \y in {0,2,4,6}{
  \foreach \x in {0,2,4,6}{
    {\node at (\x,\y){\textcolor{white}{\LARGE $\bullet$}};{\node at (\x,\y){\textcolor{black}{\LARGE $\circ$}};
        }}}}

\foreach \y in {0,2,4,6}{\node at (0,\y){\textcolor{gray}{\LARGE $\bullet$}};}
\foreach \x in {2,4,6}{\node at (\x,0){\textcolor{black}{\LARGE $\bullet$}};}
\node at (2,2){\textcolor{black}{\LARGE $\bullet$}};
\node at (4,4){\textcolor{black}{\LARGE $\bullet$}};
\node at (2,4){\textcolor{gray}{\LARGE $\bullet$}};
\node at (4,2){\textcolor{gray}{\LARGE $\bullet$}};

\foreach \y [count=\yi] in {0,...,15}{
    \pgfmathtruncatemacro\xend{15-\y}
    \draw (11,\y)--(11+\xend,\y); \draw (11+\y,0)--(11+\y,\xend); \draw(11+\y,0)--(11,\y);}

\foreach \y [count=\yi] in {0,...,15}{
    \pgfmathtruncatemacro\xend{15-\y}
  \foreach \x [count=\xi] in {0,...,\xend}{
    {\node at (11+\x,\y){\textcolor{white}{\LARGE $\bullet$}};   }}}

\foreach \y [count=\yi] in {0,...,15}{
    \pgfmathtruncatemacro\xend{15-\y}
  \foreach \x [count=\xi] in {0,...,\xend}{
    {\node at (11+\x,\y){\textcolor{black}{\LARGE $\circ$}};   }}} 

\foreach \y in {1,...,15}{\node at (11,\y){\textcolor{gray}{\LARGE $\bullet$}};}
\foreach \x in {11,...,18}{\node at (\x,0){\textcolor{black}{\LARGE $\bullet$}};}
\foreach \y in {2,4,6}{\node at (12,\y){\textcolor{gray}{\LARGE $\bullet$}};}
\foreach \x in {12,...,16}{\foreach \y in {1,...,5}{\node at (\x,\y){\textcolor{gray}{\LARGE $\bullet$}};}}
\foreach \x/\y in {12/1,13/1,13/2,13/3,14/2,14/3,15/1,15/4,15/5,16/4}{\node at (\x,\y){\textcolor{black}{\LARGE $\bullet$}};}

\foreach \x/\y in {12/5,14/5,16/5,16/3,16/1}{\node at (\x,\y){\textcolor{white}{\LARGE $\bullet$}};}
\foreach \x/\y in {12/5,14/5,16/5,16/3,16/1}{\node at (\x,\y){\textcolor{black}{\LARGE $\circ$}};}

\draw[->,black,ultra thick,rounded corners,dashed] (1,-1)--(1,3)--(3,3)--(3,1)--(5,1);

\draw[->,black,ultra thick,rounded corners,dotted] (10,0.5)--(11.5,0.5)--(11.5,1.5)--(12.5,1.5)--(12.5,3.5)--(14.5,3.5)--(14.5,2.5)--(14.5,1.5)--(13.5,1.5)--(13.5,0.5)--(14.5,0.5)--(14.5,1.5)--(15.3,1.5);

\node at (-0.2,-0.6){$(0,0)$};\node at (10.8,-0.6){$(0,0)$};
\node at (7,6.5){($2^n$-1,$2^n$-1)};\node at (9.8,15.5){(0,$2^{n+2}$-1)};\node at (27,-0.6){($2^{n+2}$-1,0)};

\end{tikzpicture}
\par}
\caption{Example of the reduction from \twodb\ to \sperner.
On the left is a discrete Brouwer function and on the right is the derived Sperner triangulation.
A white point denotes 0, grey denotes 1, black denotes 2.
Dashed arrows show the paths followed by the ``natural'' path-following algorithms.}\label{fig:redex}
\end{figure}

\hide{
\paragraph{Reducing \sperner\ to \twodb}{
Let $I$ be an instance of \sperner\ with size parameter value $n$.
$f$ constructs an instance $I'$ of \twodb\ with size parameter value $n+2$, where every point
$(x,y)$ in $I$ has 4 corresponding points in $i'$ coloured as follows.
\begin{itemize}
\item{In $I'$, $(2x+1,2y+1)$, $(2x+1,2x+2)$ and $(2x+2,2x+1)$ receive the same colour as point $(x,y)$ in $I$.
Point $(2x+2,2x+2)$ receives the colour of $(x,y+1)$ in $I$.}
\item Points on the boundary obey the boundary condition of \twodb.
\item Remaining points receive colour $0$.
\end{itemize}

$g$ takes the top left corner $(x,y)$ of a square that constitutes a solution of $I'$,
and searches for a trichromatic triangle in $I$ that has vertex $(\lfloor x/2\rfloor,\lfloor y/2\rfloor)$.
}

\begin{figure}
{\centering
\begin{tikzpicture}[scale=0.8]

\draw[cyan] (-1,-1) grid (18,8);

\foreach \y [count=\yi] in {0,...,7}{
    \pgfmathtruncatemacro\xend{7-\y}
    \draw (0,\y)--(\xend,\y); \draw (\y,0)--(\y,\xend); \draw(\y,0)--(0,\y);
}

\foreach \y [count=\yi] in {0,...,7}{
    \pgfmathtruncatemacro\xend{7-\y}
  \foreach \x [count=\xi] in {0,...,\xend}
    {\node at (\x,\y){\textcolor{white}{$\bullet$}};}
}

\foreach \y [count=\yi] in {8.5,9,...,17}{
    \draw (\y,-0.5)--(\y,8.5); \draw (8.5,\y-9)--(17.5,\y-9);
}
\end{tikzpicture}
\par}
\caption{Example of the reduction.}\label{fig:redex}
\end{figure}

\begin{figure}
{\centering
\begin{tikzpicture}

\draw[black] (0,0) grid (8,8);

\node at (0.5,0.5) {\LARGE 1};\node at (1.5,0.5) {\LARGE 1};\node at (2.5,0.5) {\LARGE 1};\node at (3.5,0.5) {\LARGE 1};
\node at (4.5,0.5) {\LARGE 1};\node at (5.5,0.5) {\LARGE 1};\node at (6.5,0.5) {\LARGE 1};\node at (7.5,0.5) {\LARGE 1};

\node at (0.5,1.5) {\LARGE 2};\node at (1.5,1.5) {\LARGE 1};\node at (2.5,1.5) {\LARGE 1};\node at (3.5,1.5) {\LARGE 1};
\node at (4.5,1.5) {\LARGE 1};\node at (5.5,1.5) {\LARGE 2};\node at (6.5,1.5) {\LARGE 2};\node at (7.5,1.5) {\LARGE 0};

\node at (0.5,2.5) {\LARGE 2};\node at (1.5,2.5) {\LARGE 2};\node at (2.5,2.5) {\LARGE 2};\node at (3.5,2.5) {\LARGE 2};
\node at (4.5,2.5) {\LARGE 2};\node at (5.5,2.5) {\LARGE 1};\node at (6.5,2.5) {\LARGE 0};\node at (7.5,2.5) {\LARGE 0};

\node at (0.5,3.5) {\LARGE 2};\node at (1.5,3.5) {\LARGE 0};\node at (2.5,3.5) {\LARGE 0};\node at (3.5,3.5) {\LARGE 0};
\node at (4.5,3.5) {\LARGE 1};\node at (5.5,3.5) {\LARGE 1};\node at (6.5,3.5) {\LARGE 0};\node at (7.5,3.5) {\LARGE 0};

\node at (0.5,4.5) {\LARGE 2};\node at (1.5,4.5) {\LARGE 0};\node at (2.5,4.5) {\LARGE 0};\node at (3.5,4.5) {\LARGE 0};
\node at (4.5,4.5) {\LARGE 0};\node at (5.5,4.5) {\LARGE 0};\node at (6.5,4.5) {\LARGE 2};\node at (7.5,4.5) {\LARGE 0};

\node at (0.5,5.5) {\LARGE 2};\node at (1.5,5.5) {\LARGE 0};\node at (2.5,5.5) {\LARGE 0};\node at (3.5,5.5) {\LARGE 0};
\node at (4.5,5.5) {\LARGE 0};\node at (5.5,5.5) {\LARGE 0};\node at (6.5,5.5) {\LARGE 0};\node at (7.5,5.5) {\LARGE 0};

\node at (0.5,6.5) {\LARGE 2};\node at (1.5,6.5) {\LARGE 1};\node at (2.5,6.5) {\LARGE 1};\node at (3.5,6.5) {\LARGE 0};
\node at (4.5,6.5) {\LARGE 0};\node at (5.5,6.5) {\LARGE 0};\node at (6.5,6.5) {\LARGE 0};\node at (7.5,6.5) {\LARGE 0};

\node at (0.5,7.5) {\LARGE 2};\node at (1.5,7.5) {\LARGE 0};\node at (2.5,7.5) {\LARGE 0};\node at (3.5,7.5) {\LARGE 0};
\node at (4.5,7.5) {\LARGE 0};\node at (5.5,7.5) {\LARGE 0};\node at (6.5,7.5) {\LARGE 0};\node at (7.5,7.5) {\LARGE 0};

\draw[ultra thick] (0,0)--(8,0)--(8,8)--(0,8)--(0,0);

\draw[->,line width=3pt,rounded corners] (0,1)--(1,1)--(1,2)--(2,2);
\draw[->,line width=3pt,rounded corners] (2,2)--(5,2)--(5,3)--(4,3);
\draw[->,line width=3pt,rounded corners] (6,2)--(5,2)--(5,1)--(7,1);
\draw[->,line width=3pt,rounded corners] (1,6)--(1,7);

\end{tikzpicture}
\par}
\caption{Example of a discrete Brouwer function on the $2^n\times 2^n$ grid (with $n=3$).
We also show the associated paths in the plane consisting of lines
that separate the 1 region from the 2 region, and oriented with 1 on
the right of the direction of the path.}\label{fig:example}
\end{figure}

}

\begin{definition}\label{def:pfa}
A {\em path-following algorithm} \alg\ for a \ppad-complete problem
$X$ is defined in terms of a reduction $(f,g)$ from $X$ to \eotl.
Given an instance $I$ of $X$, \alg\ computes $f(I)$, then
it follows the path in $f(I)$ beginning at the all-zeroes string {\bf 0}, obtaining
a sink $x$ in the \eotl\ graph defined by $f(I)$. The algorithm outputs $g(x)$.
\end{definition}

Path-following algorithms take time exponential in $n$, but polynomial space.
Lexicographic search algorithms are similar in this respect, but have a fundamentally different nature.
The solutions to all known path-following algorithms are \pspace-complete to compute,
while the solutions to lexicographic search problems have lower computational complexity,
essentially in the complexity class \optp\ (Krentel~\cite{K88}) characterising the
complexity of finding the lexicographically first satisfying assignment of an \np\ search problem.
Notice that an incorrect solution $s$ has a short certificate, consisting of a solution $s'$
that precedes $s$ lexicographically.

\begin{paragraph}{}
We next define path-following algorithms for the two problems under consideration.

\begin{definition}\label{def:brouweralgo}
The {\em natural path-following algorithm} for \twodb\ is defined as follows.
\begin{enumerate}
\item Let $S$ be the bottom left-hand square of the grid; by construction $S$
contains vertices coloured 1 and 2.
\item\label{steptwo} If $S$ contains vertices coloured 0, 1, and 2, halt and output $S$
\item Else, let $S'$ be a square adjacent to $S$ that shares an edge with vertices $v_1$ and $v_2$ coloured 1 and 2,
such that $v_1$ is to the left of $v_2$ when viewed from $S$. If 2 such squares exist, select the one
that results in a right turn from the previous direction moved.
\item Set $S=S'$ and return to Step~\ref{steptwo}.
\end{enumerate}
\end{definition}
To see that this is a path-following algorithm according to Definition~\ref{def:pfa},
note that the constructed \eotl\ graph has a node for for every
square in the grid, with an additional node for any square having vertices
coloured 1,2,1,2 in clockwise order, and arcs that connect squares in the
way indicated in the algorithm.
Figure~\ref{fig:redex} (LHS) shows the path leading to the
solution at the bottom right square.

\begin{definition}\label{def:sperneralgo}
The {\em natural path-following algorithm} for \sperner\ is defined as follows.
\begin{enumerate}
\item Given an instance $I$ of \sperner, construct an extended triangulation $T$
by adding a point $p$ to the left, coloured 1, and edges between $p$ and all
grid points $(0,y)$.
\item Let $G$ be a graph where each node corresponds with a triangle of $T$.
\item If triangle $t'$ can be reached from triangle $t$ by crossing an edge
with a vertex coloured 1 on the left and 2 on the right, there is an arc in $G$
between the corresponding nodes of $G$.
\item By construction, the triangle with vertices $\{p,(0,0),(0,1)\}$
has no incoming edge and one outgoing edge; follow the path in $G$ that
begins at this triangle.
\end{enumerate}
\end{definition}
This algorithm constitutes the standard proof that \sperner\ is indeed
a total search problem.
Figure~\ref{fig:redex} (RHS) shows the path leading to the unique solution
found by this algorithm. $p$ and its edges are not shown; in the
example the extra triangles do not occur in the path.
\end{paragraph}

\begin{theorem}\label{maintheorem}
It is \pspace-complete to compute the output of the natural path-following
algorithm, applied to instances of \twodb.
\end{theorem}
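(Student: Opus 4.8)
\emph{Proof idea.}
Membership in \pspace\ is the easy direction: the natural path-following algorithm of Definition~\ref{def:brouweralgo} stores only the current square and the direction of the last move, so it uses polynomial space (although exponential time), and hence the predicate ``the $i$-th bit of the output square equals $b$'' — the decision version of ``compute the output'' — lies in \pspace. The content is the hardness direction, which I would prove by reducing from the problem \oeotl\ (``other end of this line''): given an \eotl\ instance $(P,S)$, compute the far endpoint of the path that starts at $\mathbf 0$ in $G(P,S)$ (in decision form, a prescribed bit of that endpoint).

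The first step is to establish that \oeotl\ is \pspace-complete. Membership is immediate (follow the path from $\mathbf 0$ in polynomial space). For hardness I would start from a canonical \pspace-complete problem — deciding whether a fixed polynomial-space Turing machine $M$ accepts an input $w$, equivalently \qbf\ — and build $(P,S)$ so that the path out of $\mathbf 0$ in $G(P,S)$ simulates a \emph{reversible} run of $M$ on $w$ (via the standard reversible simulation of a deterministic space-bounded computation): successive vertices on the path encode successive configurations, so the path is simple and has no spurious branchings, its far endpoint encodes the halting configuration of $M$, and a polynomial-time predicate reads off from that endpoint whether $M$ accepted. Hence computing (a bit of) the endpoint of the $\mathbf 0$-path is \pspace-complete.

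The second, and main, step is a reduction $h$ from \eotl\ to \twodb\ that \emph{preserves the path-following solution}. $h(P,S)$ is a \twodb\ instance in which the image of the $\mathbf 0$-path of $G(P,S)$ is ``drawn'' as a corridor of $1$/$2$-coloured squares beginning at the bottom-left square — which the boundary condition forces to be the unique boundary $1$--$2$ adjacency — so that the algorithm of Definition~\ref{def:brouweralgo} walks along this corridor and halts in a ``pocket'' of colour $0$ at its far end, from whose location a polynomial-time map $g$ recovers the endpoint of the $\mathbf 0$-path. This is essentially the Chen--Deng embedding of an \eotl\ graph into a \twodb\ instance, specialised to the single line through $\mathbf 0$ (the other lines and cycles of $G(P,S)$ are drawn elsewhere and are irrelevant), but with two properties verified explicitly: (i) colour $0$ occurs nowhere that would create a trichromatic square before the designated pocket, so the algorithm does not halt early; and (ii) at every square where two candidate continuations exist — the ``$1,2,1,2$ in clockwise order'' crossing squares, which are unavoidable because Hamming-close vertex labels need not be geometrically close under the grid embedding — the crossing gadget is laid out so that the ``right turn'' tie-break of Step~3, together with the ``colour $1$ on the left of travel'' convention, sends the path straight through along the intended strand rather than onto the crossing strand. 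Composing, any algorithm that computes the \twodb\ path-following solution of $h(P,S)$ and applies $g$ solves \oeotl, giving \pspace-hardness.

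The step I expect to be the main obstacle is exactly property (ii): checking that the orientation conventions of Definition~\ref{def:brouweralgo} are consistent with the turning, crossing, and end-of-line-pocket gadgets, so that the algorithm provably traces the image of the $\mathbf 0$-path and nothing else. A secondary point is the bookkeeping needed to make $g$ run in polynomial time: the corridor must carry enough local data (for instance through the low-order coordinate bits, as in the $\twodb \to \sperner$ reduction sketched above) that the endpoint's bit-string can be decoded from the single square the algorithm returns, without re-traversing the path.
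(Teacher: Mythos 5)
Your membership argument and your first step (that computing the far end of the $\mathbf{0}$-path in an \eotl\ instance, i.e.\ \oeotl, is \pspace-complete via reversible Turing-machine simulation) are fine and agree with the background the paper relies on. The gap is in your main step, property~(ii). You assume the Chen--Deng-style embedding of $G(P,S)$ into the plane can be arranged so that, at each square whose corners are coloured $1,2,1,2$, the path-following algorithm continues ``straight through along the intended strand.'' This is not a delicate detail to be checked --- it is impossible. At such a square the colouring itself forces which edges are entrances (colour 2 on the left of the direction of travel) and which are exits (colour 1 on the left), and these alternate around the square; the two strands passing through it therefore necessarily \emph{bounce off} each other (each entrance is matched to the exit adjacent to it), and two wires can never cross transversally in this model. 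Consequently, whenever the drawn image of the $\mathbf{0}$-path meets the drawn image of a different path of $G(P,S)$ --- and such meetings are unavoidable, since the polynomial-size colouring circuit must draw \emph{all} edges of $G(P,S)$ (it cannot identify the $\mathbf{0}$-path locally, as that is itself \pspace-hard) and the edges of an arbitrary \eotl\ graph cannot be laid out in the plane without intersections --- the connectivity is rewired and the walk from the origin can end at the endpoint of a completely different path. The paper states this explicitly in its discussion contrasting \twodb\ with \threedb: after composing the two reductions, ``the only thing that is preserved is the number of endpoints of paths,'' and ``this problem is incurable.'' So your reduction does not preserve the path-following solution, and the composition with \oeotl-hardness does not go through.

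The paper's actual proof sidesteps the obstruction rather than overcoming it: it reduces directly from \qbf, building for each subformula $\Phi_{\bf x}$ a recursively nested planar structure ${\cal S}(\Phi_{\bf x})$ whose wires are routed entirely \emph{without crossings} (the $\forall$ and $\exists$ gadgets of Figures~\ref{fig:connectuniv} and~\ref{fig:connectexist} connect the terminals of the two child boxes using disjoint planar arcs), so that the single wire leaving the origin provably terminates at the YES or NO terminal according to the truth value of $\Phi$. If you want to keep an \eotl-based route you would need to restrict to a subclass of \eotl\ instances that is still \pspace-hard for \oeotl\ and whose graphs admit a locally computable crossing-free planar embedding --- which is essentially what the \qbf\ construction delivers in disguise.
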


Theorem~\ref{maintheorem} is proved in the next section.
Before that, we note the following straightforward corollary.

\begin{corollary}\label{cor:sperner}
It is \pspace-complete to compute the output of the natural path-following
algorithm, applied to instances of \sperner.
\end{corollary}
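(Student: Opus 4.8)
The plan is to show that the reduction $(f,g)$ from \twodb\ to \sperner\ recalled above is, beyond being a reduction in the sense of Definition~\ref{def:reduction}, \emph{path-following preserving}: for every instance $I$ of \twodb, if $t^{\star}$ is the triangle output by the natural \sperner\ path-following algorithm (Definition~\ref{def:sperneralgo}) on input $f(I)$, then $g(t^{\star})$ is the square output by the natural \twodb\ path-following algorithm (Definition~\ref{def:brouweralgo}) on input $I$. Granting this, \pspace-completeness follows from \refTheorem{maintheorem}: membership holds because the \sperner\ path-following algorithm keeps only a single triangle of the extended triangulation in memory and advances it by evaluating the circuit of $f(I)$ at a constant number of nearby grid points, so its output is computable in polynomial space; and hardness holds because from the \sperner\ path-following solution of $f(I)$ one obtains in polynomial time, by applying $g$, the \twodb\ path-following solution of $I$, which is \pspace-complete to compute.

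To establish path-following preservation I would run the two processes side by side (cf.\ \refFigure{fig:redex}). Under $f$, each unit square $S$ of the grid of $I$ corresponds to the block $\widehat S$ of four unit squares (eight triangles) of $f(I)$ obtained by doubling coordinates; the four corners of $\widehat S$ inherit the colours of the corners of $S$, the side-midpoints and centre of $\widehat S$ receive the colours prescribed by $f$, and $g$ maps any triangle lying inside such a block $\widehat S$ back to $S$. Both algorithms trace the boundary between the union of the $1$-coloured regions and the union of the $2$-coloured regions, keeping colour $1$ on the left of the direction of travel. I would then prove, by induction along the \twodb\ path $S_0, S_1, S_2, \dots$, the invariant that when the \twodb\ path enters $S_k$ across the edge $e_{k-1}$ it shares with $S_{k-1}$, the \sperner\ path has just entered $\widehat S_k$ across the unique $1$--$2$ sub-edge lying on the doubled image of $e_{k-1}$; and that, if $S_k$ is not a solution and the \twodb\ path leaves $S_k$ across $e_k$ towards $S_{k+1}$, then the \sperner\ path threads through triangles of $\widehat S_k$ and leaves it across the $1$--$2$ sub-edge on the doubled image of $e_k$, whereas if $S_k$ is the \twodb\ solution then the \sperner\ path first reaches a trichromatic triangle inside $\widehat S_k$, which $g$ sends back to $S_k$. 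The base case needs the separate remark that the \sperner\ start triangle $\{p,(0,0),(0,1)\}$ --- whose role as the start of the path relies on the origin of $f(I)$ being recoloured $2$ --- directs the \sperner\ path into $\widehat S_0$ across the doubled image of the left edge of $S_0$, consistently with the way the \twodb\ algorithm initialises.

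The inductive step reduces to a finite case analysis over the colourings of the four corners of $S_k$ together with the direction of $e_{k-1}$; for most configurations the $1$--$2$ boundary enters and leaves $\widehat S_k$ in the only possible way, and the verification is a short local computation from the colours $f$ assigns inside $\widehat S_k$. The step I expect to be the main obstacle is the family of configurations in which the four corners of $S_k$ are coloured $1,2,1,2$ in cyclic order: there the \twodb\ algorithm resolves a two-way ambiguity by the right-turn rule, the boundary passes through $S_k$ as \emph{two} disjoint arcs, and one has to check that the colours $f$ places on the interior points of $\widehat S_k$ route the two \sperner\ boundary arcs so that they agree with the two right-turn passages of the \twodb\ algorithm --- adjusting the definition of $f$ on these particular squares (for instance, the orientation of the local subdivision, or the colours of the interior points) if it does not already do so. Two routine observations support the rest of the argument: no colour-$0$ padding point introduced by $f$ lies on a $1$--$2$ edge, so the padding contributes no spurious branches to the boundary; and the extra point $p$ together with its edges contributes to the $1$--$2$ boundary only at the unique start triangle, so the \sperner-specific extension in Definition~\ref{def:sperneralgo} does not affect the correspondence elsewhere.
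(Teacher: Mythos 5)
Your proposal is correct and takes essentially the same route as the paper: the paper's proof likewise invokes the reduction from \twodb\ to \sperner, asserts (``It can be checked that\dots'') that consecutive triangles under the \sperner\ path-following algorithm map under $g$ to equal or consecutive squares under the \twodb\ path-following algorithm (and conversely), and concludes that $g$ carries the \sperner\ path-following solution to the \twodb\ path-following solution, so hardness follows from Theorem~\ref{maintheorem}. Your more explicit induction and your flagging of the $1,2,1,2$ ambiguous-square case just make visible the local verification the paper leaves implicit.
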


\begin{proof}
Consider the reduction we described, from \twodb\ to \sperner.
Let $I$ be an instance of \twodb\ and $I'$ the corresponding instance of \sperner.
If $s$ is a triangle in $I'$ that lies in the square $(2x,2y),(2x+2,2y),(2x+2,2y+2),(2x,2y+2)$,
let $g(s)$ be the square $(x,y),(x+1,y),(x+1,y+1),(x,y+1)$. It can be checked that:
\begin{itemize}
\item if $s$ is a solution of $I'$ then $g(s)$ is a solution of $I$.
\item If $s$ and $s'$ are consecutive under the natural path-following algorithm
for \sperner, then either $g(s)=g(s')$, or $g(s)$ and $g(s')$ are consecutive
under the natural path-following algorithm for \twodb.
Conversely, if $g(s)$ and $g(s')$ are equal or consecutive, then there is
a short path from $s$ to $s'$.
\end{itemize}
This means that the path-following algorithm as given for \sperner\ mimics the
path-following algorithm as given for \twodb\, and if $s$ is the path-following solution of $I'$
then $g(s)$ is the path-following solution of $I$.
\qed
\end{proof}

\begin{paragraph}{Motivation, and some related work:}
It is known from \cite{CP95} that it is \pspace-complete to find the solution
of instances $(P,S)$ of \eotl\ consisting of the unique sink that is
connected to the given source {\bf 0} in $G(P,S)$.
That sink is the ``path-following solution'' of $(P,S)$.
(In~\cite{GPS11} this challenge is called {\sc Eotpl} for ``End of this particular line''.)
A well-known result of algorithmic game theory is the \ppad-completeness
of computing a Nash equilibrium of a given bimatrix game~\cite{cdt09,dgp09}.
Now, there are certain path-following algorithms that compute Nash equilibria, 
and these algorithms are of interest as a theory of equilibrium selection~\cite{HS}.
The best-known path-following algorithm for Nash equilibrium computation is the Lemke-Howson algorithm~\cite{LH64},
and it has been shown to be \pspace-complete to compute the outputs of Lemke-Howson
and related algorithms~\cite{GPS11}.

This paper is partly motivated by the ``paradox'' that the Lemke-Howson
algorithm is efficient in practice, although the computational complexity of its
solutions is much higher than unrestricted solutions, or even selected solutions
such as the lexicographically-first equilibrium.
Relevant to this is the possibility of a general principle,
saying that given any \ppad-complete problem $X$, and a path-following algorithm
${\cal A}$ for $X$, that the outputs of ${\cal A}$ are \pspace-complete to compute.

The problem under consideration, \twodb, represented a challenge instance for this conjecture.
To see why, it is helpful to contrast it with its three-dimensional analog, \threedb,
originally shown to be \ppad-complete by Papadimitriou~\cite{Pap94} in 1994.
Let $(f,g)$ be the reduction of~\cite{Pap94} from \eotl\ to \threedb, and let $(f',g')$
be the reduction from \threedb\ to \eotl\ that corresponds with the natural path-following
algorithm from \threedb\ (where we begin at the origin, and follow a sequence of cubes
that have colours 1,2, and 3, until colour 0 is encountered).
It can be checked that an \eotl\ graph $G(P,S)$ is topologically similar to the graph $G(f'(f(P,S)))$
in the sense that given any nodes $v_1$, $v_2$ in $G(P,S)$ at opposite ends of a
directed path, there exist nodes $v'_1$, $v'_2$ in $G(f'(f(P,S)))$ at opposite ends of
a directed path, such that $g(g'(v'_1))=v_1$ and $g(g'(v'_2))=v_2$.
Moreover, $g(g'({\bf 0}))={\bf 0}$, the directions of these paths remains the same, and the
correspondence between these pairs of endpoints is 1-1.
For our purposes, the main point to note is that if $v$ is the sink of $G(f'(f(P,S)))$
connected to ${\bf 0}$ then $g(g'(v))$ is the sink connected to {\bf 0} in $G(P,S)$.
It follows that it's \pspace-complete to compute the path-following solution to \threedb,
since the path-following solution of $G(P,S)$ is \pspace-complete to compute.

Now consider the 2009 \ppad-completeness proof of \twodb\ in~\cite{CD}.
Applying the above notation to their reduction, when we compare $G(P,S)$
with $G(f'(f(P,S)))$, the only thing that is preserved is the number of endpoints of paths.
In other respects, the topology is changed drastically, and in general if $v$ the vertex connected
to {\bf 0} in $G(f'(f(P,S)))$, it need not the case that $g(g'(v))$ is connected to {\bf 0} in $G(P,S)$.
This problem is incurable, since when we attempt to design a scheme
for embedding a large number of edges in the plane, there will typically be
crossing-points, and these crossing-points are handled by~\cite{CD} using a
gadget that changes the topology of the graph.

In our \pspace-completeness result for \twodb, it turns out to be convenient to reduce
from \qbf; this is in contrast with the \pspace-completeness results discussed above,
which reduce from the problem of computing the configuration reached by a space
bounded Turing machine after exponentially-many steps.
\end{paragraph}

\section{Proof of Theorem~\ref{maintheorem}}

We define a gadget that is used throughout the proof.

\begin{definition}\label{def:wire}
A {\em wire} consists of a sequence of grid points $p_1,p_2,...$ coloured 1,
together with a sequence of grid points $p'_1,p'_2,\ldots$ coloured 2,
with the following properties.
The distances between any $p_i$ and $p_{i+1}$, and between any $p'_i$ and
$p'_{i+1}$, is 1. If we move from $p_i$ to $p_{i+1}$, at least one point
on the right is some point $p'_j$, and least one point on the left is coloured 0;
similarly, if we move from $p'_i$ to $p'_{i+1}$, at least one point on the
left is some $p_j$ and at least one point on the right is coloured 0.
\end{definition}

Definition~\ref{def:wire} is designed to specify a (directed) path in the plane that must be
followed by the natural path-following algorithm for \twodb, assuming that
the path being followed separates regions coloured 1 and 2.
Figure~\ref{fig:wire} shows an example of a wire.

\begin{figure}
{\centering
\begin{tikzpicture}[scale=0.5]


\draw[black] (0,0) grid (8,8);
\foreach \y in {0,...,8}{
  \foreach \x in {0,...,8}{
    {\node at (\x,\y){\textcolor{white}{\LARGE $\bullet$}};{\node at (\x,\y){\textcolor{black}{\LARGE $\circ$}};
        }}}}

\foreach \x/\y in {1/0,1/1,1/2,1/3,1/4,1/5,1/6,2/6,3/6,4/6,5/6,5/5,5/4,5/3,6/3,7/3}{\node at (\x,\y){\textcolor{gray}{\LARGE $\bullet$}};}

\foreach \x/\y in {2/0,2/1,2/2,2/3,2/4,2/5,3/5,4/5,4/4,4/3,4/2,5/2,6/2,7/2}{\node at (\x,\y){\textcolor{black}{\LARGE $\bullet$}};}

\draw[black] (15,0) grid (23,8);

\draw[->,line width=3pt,rounded corners] (16.5,0)--(16.5,5.5)--(19.5,5.5)--(19.5,2.5)--(21.5,2.5);

\end{tikzpicture}
\par}
\caption{Wire example (left-hand side), and a simplified depiction (right-hand side) used in subsequent
diagrams, showing the directed path taken by the natural path-following algorithm.}\label{fig:wire}
\end{figure}
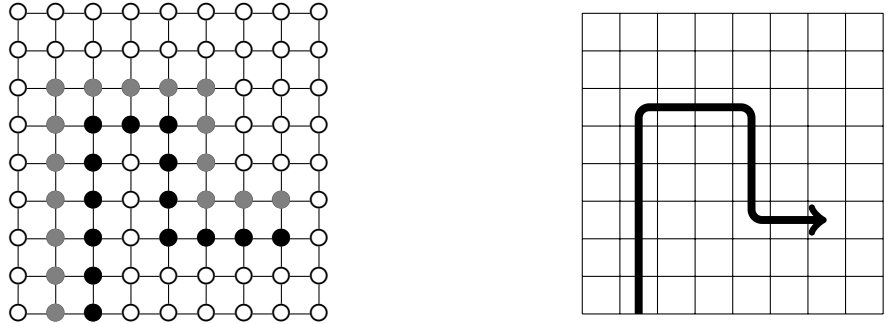

Suppose that an instance of \twodb\ is composed entirely of wires.
Notice that an endpoint of a wire ---other than the one at the origin---
is located at a solution of \twodb,
and any solution of \twodb\ is located at the endpoints of wires.

\bigskip
We reduce from \qbf, the problem of checking whether a given quantified boolean formula
with no free variables evaluates to true. Let
\[
\Phi = Q_1x_1Q_2x_2\ldots Q_nx_n\phi
\]
where $\phi$ is a propositional formula over variables $x_1\ldots x_n$
and each $Q_i$ is a quantifier. Define an $i$-subformula of $\Phi$ to
be a QBF derived from $\Phi$ by fixing the values of variables
$x_1,\ldots,x_i$. Thus there are $2^i$ $i$-subformulae of $\Phi$. A
0-subformula of $\Phi$ is just $\Phi$ itself.  If {\bf x} is a bit
string of length $i$ let $\Phi_{\bf x}$ be the $i$-subformula obtained
by setting $x_1,\ldots,x_i$ to {\bf x}.

Given $\Phi$, we construct an instance of \twodb\ given by a circuit $C_\Phi$ of size
polynomial in $n$, where $C_\Phi$ colours points in the integer grid having coordinates
that may be exponential in $n$.
Each $i$-subformula $\Phi_{\bf x}$ has an associated structure ${\cal S}(\Phi_{\bf x})$ in the planar grid.
Each such structure has an incoming wire on its left; on its right it has two outgoing wires,
with an incoming wire between them. The intention is that the
structure will have the following property:

\paragraph{Property 1:}{if $\Phi_{\bf x}$ evaluates to TRUE, then the
incoming wire on the left of ${\cal S}(\Phi_{\bf x})$ will link to the bottom outgoing wire on the
right, otherwise it will link to the top outgoing wire. The other two
wires will link together.}

\paragraph{Details of the construction of ${\cal S}(\Phi_{\bf x})$:}{
The location of ${\cal S}(\Phi_{\bf x})$ is as follows:
\begin{itemize}
\item ${\cal S}(\Phi_{\bf x})$ has a bounding box with height and width proportional to $2^{n-|{\bf x}|}$.
The coordinates of the bounding box are easy to compute from $\Phi$ and {\bf x}.
\item The bounding box of ${\cal S}(\Phi_{\bf x})$ contains the bounding boxes of
${\cal S}(\Phi_{{\bf x}0})$ and ${\cal S}(\Phi_{{\bf x}1})$, which are arranged side-by-side
(with a small gap between them) with their centres having the same $y$-coordinate.
(See Figure~\ref{fig:overall}).
\item For $|{\bf x}|=n$, structures ${\cal S}(\Phi_{\bf x})$ are arranged in a horizontal line in increasing order of {\bf x}.
\end{itemize}

Each bounding box has on its left-hand side, an ``incoming terminal'', a site
where a wire inside the bounding box ends at the left-hand side, and is directed
towards the interior of the box. On the right-hand side there are two outgoing terminals
(ends of wires directed outwards) and another incoming terminal between them
(See Figures~\ref{fig:bottomlevel}, \ref{fig:connectuniv}, \ref{fig:connectexist}).
The coordinates of these terminals are, of course, easy to compute from $\Phi$ and {\bf x}.

For $|{\bf x}|=n$, if {\bf x} satisfies $\phi$, ${\cal S}(\Phi_{\bf x})$
is as depicted on the LHS of Figure~\ref{fig:bottomlevel}, else as
shown on the RHS. Of course, satisfaction of $\phi$ can be checked
by a poly-sized boolean circuit that takes as input the coordinates of
the location of ${\cal S}(\Phi_{\bf x})$ and places the correct structure. Note that Property 1 is
satisfied by this design.

For $|{\bf x}|<n$, each $i$-subformula $\Phi_{\bf x}$ has structure ${\cal S}(\Phi_{\bf
 x})$ obtained by connecting up the structures ${\cal S}(\Phi_{{\bf
 x}0})$ and ${\cal S}(\Phi_{{\bf x}1})$ as described below.  By
construction these two sub-structures are consecutive and adjacent:
${\cal S}(\Phi_{{\bf x}0})$ just to the left of ${\cal S}(\Phi_{{\bf
 x}1})$.  Assume inductively that Property 1 is satisfied by these
structures.  We connect up their incoming and outgoing wires as shown
in Figure~\ref{fig:connectuniv} if $Q_{i+1}=\forall$ and as shown in
Figure~\ref{fig:connectexist} if $Q_{i+1}=\exists$.

The outermost structure ${\cal S}(\Phi)$ is connected by a wire starting from the known
source at the bottom left-hand corner of the domain, as shown in Figure~\ref{fig:overall}.
}

\medskip\noindent
It is useful in the subsequent proof to highlight the following observation:
\begin{observation}\label{obs:key}
Within any ${\cal S}(\Phi_{\bf x})$, the topological structure of the wires is that
each incoming terminal leads to an outgoing terminal, and there are no internal ends of wires.
\end{observation}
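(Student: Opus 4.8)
The plan is to prove Observation~\ref{obs:key} by induction on $n-|{\bf x}|$, mirroring exactly the recursive definition of ${\cal S}(\Phi_{\bf x})$: the base case is $|{\bf x}|=n$, and the inductive step passes from the sibling pair ${\cal S}(\Phi_{{\bf x}0})$, ${\cal S}(\Phi_{{\bf x}1})$ to ${\cal S}(\Phi_{\bf x})$. The bookkeeping quantity is that every such structure has exactly four boundary terminals --- one incoming terminal on the left, and on the right two outgoing terminals with one incoming terminal between them, hence two incoming and two outgoing --- so the statement to maintain is: inside the bounding box the colour-$1$ and colour-$2$ point sequences form wires in the sense of Definition~\ref{def:wire}, every incoming boundary terminal is the start of a wire that ends at an outgoing boundary terminal, and no wire has an endpoint in the interior of the box.

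For the base case $|{\bf x}|=n$ I would just inspect the two gadgets of \refFigure{fig:bottomlevel}. In each of the two cases (${\bf x}$ satisfies $\phi$, or it does not) the picture displays exactly two wires, each joining an incoming terminal to an outgoing terminal, neither with an interior endpoint; which incoming terminal is matched with which outgoing terminal is the content of Property~1 and is not needed here.

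For the inductive step with $|{\bf x}|<n$ I would invoke the inductive hypothesis for ${\cal S}(\Phi_{{\bf x}0})$ and ${\cal S}(\Phi_{{\bf x}1})$ and then inspect the fixed connection gadget that assembles them --- \refFigure{fig:connectuniv} when $Q_{i+1}=\forall$ and \refFigure{fig:connectexist} when $Q_{i+1}=\exists$ --- to read off three facts: (i) each of the eight boundary terminals of the two substructures is attached to exactly one connecting wire segment, and the colourings agree at the junction so that segment and substructure wire concatenate into a single wire; (ii) the other end of each connecting segment is either a boundary terminal of ${\cal S}(\Phi_{\bf x})$ or a boundary terminal of the other substructure; (iii) orientations always match at a junction, an outgoing end meeting an incoming end. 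Given (i)--(iii), the wires of ${\cal S}(\Phi_{\bf x})$ are precisely the maximal alternating concatenations of substructure wires and connecting segments; starting at any incoming boundary terminal of ${\cal S}(\Phi_{\bf x})$ and moving along such a concatenation in the direction of its orientation, the inductive hypothesis guarantees that each substructure wire we enter carries us from one of its incoming terminals to one of its outgoing terminals without stopping inside, so the walk can only terminate at an outgoing boundary terminal of ${\cal S}(\Phi_{\bf x})$. This is exactly the claim for ${\cal S}(\Phi_{\bf x})$; applied to the outermost structure it says in particular that ${\cal S}(\Phi)$ has no internal wire endpoints, which is what the subsequent argument needs.

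The only non-routine part is the finite verification of (i)--(iii) for the two connection gadgets, and this is also where the two-dimensionality of \twodb\ has to be handled: because ${\cal S}(\Phi_{{\bf x}0})$ and ${\cal S}(\Phi_{{\bf x}1})$ sit side by side in disjoint bounding boxes, the connection gadget only ever joins terminals of two adjacent sibling structures, so it can be drawn with no two connecting wires crossing and hence with no topology-altering crossing gadget. Pinning down those two gadget drawings so that colourings and orientations are consistent at every junction, no stray wire endpoint is created, and no crossing is forced, is the step I expect to require the most care.
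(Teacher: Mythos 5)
Your proposal is correct and matches the paper's reasoning: the paper offers no explicit proof of Observation~\ref{obs:key}, treating it as immediate from the recursive construction, and your structural induction (base case from Figure~\ref{fig:bottomlevel}, inductive step from the connection gadgets of Figures~\ref{fig:connectuniv} and~\ref{fig:connectexist}) is exactly the formalization of that argument. The finite checks you flag --- that every terminal of the two substructures and of the outer box is attached to exactly one connecting wire, with matching colours and orientations and no crossings --- are indeed all that needs verifying, and they hold in both gadgets.
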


\paragraph{Proof that the connections encode the truth value of $\Phi$:}{
We claim that $\Phi$ evaluates to TRUE if and only if the wire that begins at
the origin ends at the lower outgoing terminal of ${\cal S}(\Phi)$, otherwise the wire
ends at the upper outgoing terminal (labelled NO in Figure~\ref{fig:overall}).

This is proved to hold for all structures ${\cal S}(\Phi_{\bf x})$ by backwards induction
on the length of {\bf x}. It can be seen to hold for $|{\bf x}|=n$, from Figure~\ref{fig:bottomlevel}.
We show that it holds for $|{\bf x}|=i<n$, assuming that it holds for all {\bf x} with $|{\bf x}|>i$.

Suppose first that $\Phi_{\bf x}$ is a universally quantified subformula, thus
$\Phi_{\bf x}=\forall x_{i+1} \cdots$. Thus $\Phi_{\bf x}$ is satisfied if $\Phi_{{\bf x}0}$ and
$\Phi_{{\bf x}1}$ are both satisfied.  We connect up ${\cal S}(\Phi_{{\bf x}0})$ and
${\cal S}(\Phi_{{\bf x}1})$ as shown in Figure~\ref{fig:connectuniv}.
Suppose $\Phi_{{\bf x}0}$ and $\Phi_{{\bf x}1}$ are indeed both satisfied.
By the inductive hypothesis, the incoming wire on the left
connects with the outgoing wire from ${\cal S}(\Phi_{{\bf x}0})$
labelled YES, which then connects via ${\cal S}(\Phi_{{\bf x}1})$
to the RHS outgoing YES wire.  If $\Phi_{{\bf x}0}$ is not
satisfied, the incoming wire connects to the outgoing NO wire of
${\cal S}(\Phi_{{\bf x}0})$, and thence directly to the RHS NO
wire.  If $\Phi_{{\bf x}0}$ is satisfied, but not $\Phi_{{\bf x}1}$,
then we reach the outgoing NO wire of ${\cal S}(\Phi_{{\bf x}1})$,
which feeds back to the RHS incoming wire of ${\cal S}(\Phi_{{\bf x}0})$,
and by Observation~\ref{obs:key}, the only way out of
${\cal S}(\Phi_{{\bf x}0})$ is via its outgoing NO wire (since
$\Phi_{{\bf x}0}$ is satisfied, the outgoing YES wire connects to
the incoming LHS wire), which takes us to the outgoing NO wire of
${\cal S}(\Phi_{\bf x})$.

The argument for the existential quantifier is similar, with respect to Figure~\ref{fig:connectexist}.
}

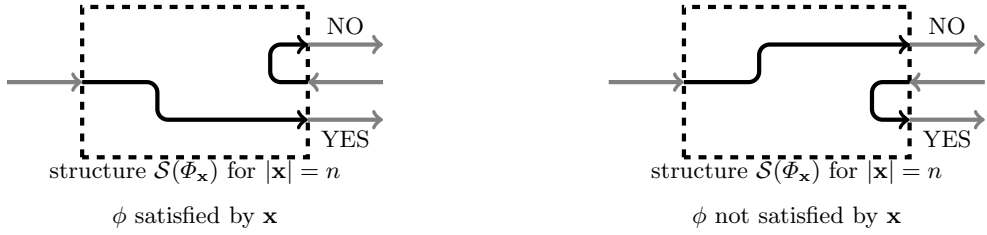
\begin{figure}
\begin{tikzpicture}

\draw[thick,white] (0,2)--(16,2)--(16,6)--(0,6)--(0,2);

\draw[->,gray,ultra thick] (1,4)--(2,4);
\draw[dashed,ultra thick] (2,3)--(2,5)--(5,5)--(5,3)--(2,3);
\draw[->,gray,ultra thick] (5,3.5)--(6,3.5);
\draw[<-,gray,ultra thick] (5,4.0)--(6,4.0);
\draw[->,gray,ultra thick] (5,4.5)--(6,4.5);
\node at (5.5,4.75) {NO};
\node at (5.5,3.25) {YES};
\draw[->,ultra thick,rounded corners] (2,4)--(3,4)--(3,3.5)--(5,3.5);
\draw[->,ultra thick,rounded corners] (5,4)--(4.5,4)--(4.5,4.5)--(5,4.5);
\node at (3.5,2.8) {structure ${\cal S}(\Phi_{\bf x})$ for $|{\bf x}|=n$};
\node at (3.5,2.2) {$\phi$ satisfied by {\bf x}};

\draw[->,gray,ultra thick] (9,4)--(10,4);
\draw[dashed,ultra thick] (10,3)--(10,5)--(13,5)--(13,3)--(10,3);
\draw[->,gray,ultra thick] (13,3.5)--(14,3.5);
\draw[<-,gray,ultra thick] (13,4.0)--(14,4.0);
\draw[->,gray,ultra thick] (13,4.5)--(14,4.5);
\node at (13.5,4.75) {NO};
\node at (13.5,3.25) {YES};
\draw[->,ultra thick,rounded corners] (10,4)--(11,4)--(11,4.5)--(13,4.5);
\draw[->,ultra thick,rounded corners] (13,4)--(12.5,4)--(12.5,3.5)--(13,3.5);
\node at (11.5,2.8) {structure ${\cal S}(\Phi_{\bf x})$ for $|{\bf x}|=n$};
\node at (11.5,2.2) {$\phi$ not satisfied by {\bf x}};

\end{tikzpicture}
\caption{Structures associated with length-$n$ bit vectors {\bf x}.
The dashed lines show the bounding boxes.
The black wires are internal to the structures, while the wires shown in grey are part
of larger structures ${\cal S}(\Phi_{\bf y})$ for {\bf y} a prefix of {\bf x}.}\label{fig:bottomlevel}
\end{figure}

\begin{figure}
\begin{tikzpicture}

\draw[white] (0,1) grid (16,7);

\draw[->,ultra thick] (2,4)--(3,4);
\draw[dashed,ultra thick] (3,3)--(3,5)--(6,5)--(6,3)--(3,3);
\draw[->,ultra thick] (6,3.5)--(7,3.5);
\draw[->,ultra thick] (6,4.5)--(7,4.5);
\node at (6.5,4.75) {NO};
\node at (6.5,3.25) {YES};

\draw[dashed,ultra thick] (9,3)--(9,5)--(12,5)--(12,3)--(9,3);
\draw[->,ultra thick] (12,3.5)--(14.5,3.5);
\draw[<-,ultra thick] (12,4.0)--(14.5,4.0);
\draw[->,ultra thick] (12,4.5)--(13,4.5);
\node at (12.5,4.75) {NO};
\node at (12.5,3.25) {YES};

\draw[dashed,ultra thick] (2,2)--(14.5,2)--(14.5,6.5)--(2,6.5)--(2,2);
\draw[->,ultra thick,rounded corners] (7,4.5)--(7.5,4.5)--(7.5,6)--(14,6)--(14,4.5)--(14.5,4.5);
\draw[->,ultra thick,rounded corners] (13,4.5)--(13.5,4.5)--(13.5,5.5)--(8,5.5)--(8,4)--(6,4);
\draw[->,gray,ultra thick] (15.5,4)--(14.5,4);
\draw[->,gray,ultra thick] (14.5,4.5)--(15.5,4.5);
\draw[->,gray,ultra thick] (14.5,3.5)--(15.5,3.5);
\draw[->,gray,ultra thick,rounded corners] (1,4)--(2,4);
\draw[->,ultra thick,rounded corners] (7,3.5)--(8.5,3.5)--(8.5,4)--(9,4);
\node at (15,4.75) {\bf NO};
\node at (15,3.25) {\bf YES};

\node at  (4.5,2.8) {structure ${\cal S}(\Phi_{{\bf x}0})$};
\node at (10.5,2.8) {structure ${\cal S}(\Phi_{{\bf x}1})$};
\node at  (7.5,1.8) {structure ${\cal S}(\Phi_{\bf x})$};
\end{tikzpicture}
\caption{combine 2 universally quantified sub-formulae:
${\cal S}(\Phi_{\bf x})$ connects the incoming and outgoing terminals of
${\cal S}(\Phi_{{\bf x}0})$ and ${\cal S}(\Phi_{{\bf x}1})$ using the
wires shown in black; wires shown in grey are part of a larger structure.
}\label{fig:connectuniv}
\end{figure}

\begin{figure}
\begin{tikzpicture}

\draw[white] (0,0) grid (16,6);

\draw[->,ultra thick] (2,4)--(3,4);
\draw[dashed,ultra thick] (3,3)--(3,5)--(6,5)--(6,3)--(3,3);
\draw[->,ultra thick] (6,3.5)--(7,3.5);
\draw[->,ultra thick] (6,4.5)--(7,4.5);
\node at (6.5,4.75) {NO};
\node at (6.5,3.25) {YES};

\draw[dashed,ultra thick] (9,3)--(9,5)--(12,5)--(12,3)--(9,3);
\draw[->,ultra thick] (12,3.5)--(13,3.5);
\draw[->,ultra thick] (12,4.5)--(14.5,4.5);
\node at (12.5,4.75) {NO};
\node at (12.5,3.25) {YES};

\draw[dashed,ultra thick] (2,1)--(14.5,1)--(14.5,5.5)--(2,5.5)--(2,1);
\draw[->,ultra thick,rounded corners] (7,3.5)--(7.5,3.5)--(7.5,2)--(14,2)--(14,3.5)--(14.5,3.5);
\draw[->,ultra thick,rounded corners] (13,3.5)--(13.5,3.5)--(13.5,2.5)--(8,2.5)--(8,4)--(6,4);
\draw[->,gray,ultra thick] (15.5,4)--(14.5,4);
\draw[->,gray,ultra thick] (14.5,4.5)--(15.5,4.5);
\draw[->,gray,ultra thick,rounded corners] (1,4)--(2,4);
\draw[->,ultra thick,rounded corners] (7,4.5)--(8.5,4.5)--(8.5,4)--(9,4);
\draw[->,gray,ultra thick] (14.5,3.5)--(15.5,3.5);
\draw[->,ultra thick] (14.5,4)--(12,4);
\node at (15,4.75) {\bf NO};
\node at (15,3.25) {\bf YES};

\node at  (4.5,2.8) {structure ${\cal S}(\Phi_{{\bf x}0})$};
\node at (10.5,2.8) {structure ${\cal S}(\Phi_{{\bf x}1})$};
\node at  (7.5,0.8) {structure ${\cal S}(\Phi_{\bf x})$};
\end{tikzpicture}
\caption{combine 2 existentially quantified sub-formulae:
${\cal S}(\Phi_{\bf x})$ connects the incoming and outgoing terminals of
${\cal S}(\Phi_{{\bf x}0})$ and ${\cal S}(\Phi_{{\bf x}1})$ using the
wires shown in black; wires shown in grey are part of a larger structure.
}\label{fig:connectexist}
\end{figure}

\begin{figure}
\begin{tikzpicture}

\draw[dashed,ultra thick] (0,0)--(16,0)--(16,8)--(0,8)--(0,0);

\draw[->,ultra thick,rounded corners] (0,0)--(0.5,0.5)--(0.5,4)--(1,4);
\draw[dashed,ultra thick] (1,1)--(14.5,1)--(14.5,7)--(1,7)--(1,1);
\draw[->,ultra thick] (1,4)--(2,4);
\draw[dashed,ultra thick] (2,2)--(7,2)--(7,6)--(2,6)--(2,2);

\draw[dashed,ultra thick] (8,2)--(13,2)--(13,6)--(8,6)--(8,2);
\draw[->,ultra thick] (13,3.5)--(14.5,3.5);
\draw[<-,ultra thick] (13,4.0)--(14.5,4.0);

\draw[->,ultra thick,rounded corners] (7,4.5)--(7.2,4.5)--(7.2,6.7)--(14,6.7)--(14,4.5)--(14.5,4.5);
\draw[->,ultra thick,rounded corners] (13,4.5)--(13.5,4.5)--(13.5,6.3)--(7.4,6.3)--(7.4,4)--(7,4);
\draw[->,ultra thick] (15.5,4)--(14.5,4);
\draw[->,ultra thick] (14.5,4.5)--(15.5,4.5);
\draw[->,ultra thick] (14.5,3.5)--(15.5,3.5);
\draw[->,ultra thick,rounded corners] (7,3.5)--(7.6,3.5)--(7.6,4)--(8,4);
\node at (15,4.75) {\bf NO};
\node at (15,3.25) {\bf YES};

\node at  (4.5,1.8) {structure ${\cal S}(\Phi_{0})$};
\node at (10.5,1.8) {structure ${\cal S}(\Phi_{1})$};
\node at  (7.5,0.8) {structure ${\cal S}(\Phi)$};
\node at  (3.5,3.7) {${\cal S}(\Phi_{00})$};
\node at  (5.5,3.7) {${\cal S}(\Phi_{01})$};
\node at  (9.5,3.7) {${\cal S}(\Phi_{10})$};
\node at (11.5,3.7) {${\cal S}(\Phi_{11})$};

\foreach \x in {0,2,6,8}
{\draw[dashed,ultra thick] (3+\x,3)--(4+\x,3)--(4+\x,5)--(3+\x,5)--(3+\x,3);}

\foreach \x in {0,6}
{\draw[->,ultra thick](2+\x,4)--(3+\x,4);
\draw[->,ultra thick,rounded corners](4+\x,4.5)--(4.7+\x,4.5)--(4.7+\x,4)--(5+\x,4);
\draw[->,ultra thick,rounded corners](6+\x,3.5)--(6.5+\x,3.5)--(6.5+\x,2.7)
--(4.6+\x,2.7)--(4.6+\x,4)--(4+\x,4);
\draw[->,ultra thick,rounded corners](4+\x,3.5)--(4.3+\x,3.5)--(4.3+\x,2.3)--(6.7+\x,2.3)
--(6.7+\x,3.5)--(7+\x,3.5);
\draw[->,ultra thick,rounded corners](7+\x,4)--(6+\x,4);
\draw[<-,ultra thick,rounded corners](7+\x,4.5)--(6+\x,4.5);
}

\end{tikzpicture}
\caption{Encoding a formula $\Phi$ of the form $\forall x_1 \exists x_2 \cdots$ (not to scale).
The outermost dashed rectangle represents the entire domain of the corresponding
instance of \twodb. Recall that the bottom edge of the domain is coloured 2,
the left edge is coloured 1, and the other edges are coloured 0. Hence the
bottom-left square is known to contain colours 1 and 2. The arrows show the
wires connecting the (nested) structures that correspond to $\Phi$ itself,
$\Phi$ with $x_1$ set to 0 or 1, and $\Phi$ with the four alternative settings
of $x_1$ and $x_2$.\newline
The wire that begins at the origin will lead to the solution labelled YES if
$\Phi$ evaluates to TRUE, and NO if $\Phi$ evaluates to FALSE.
}\label{fig:overall}
\end{figure}
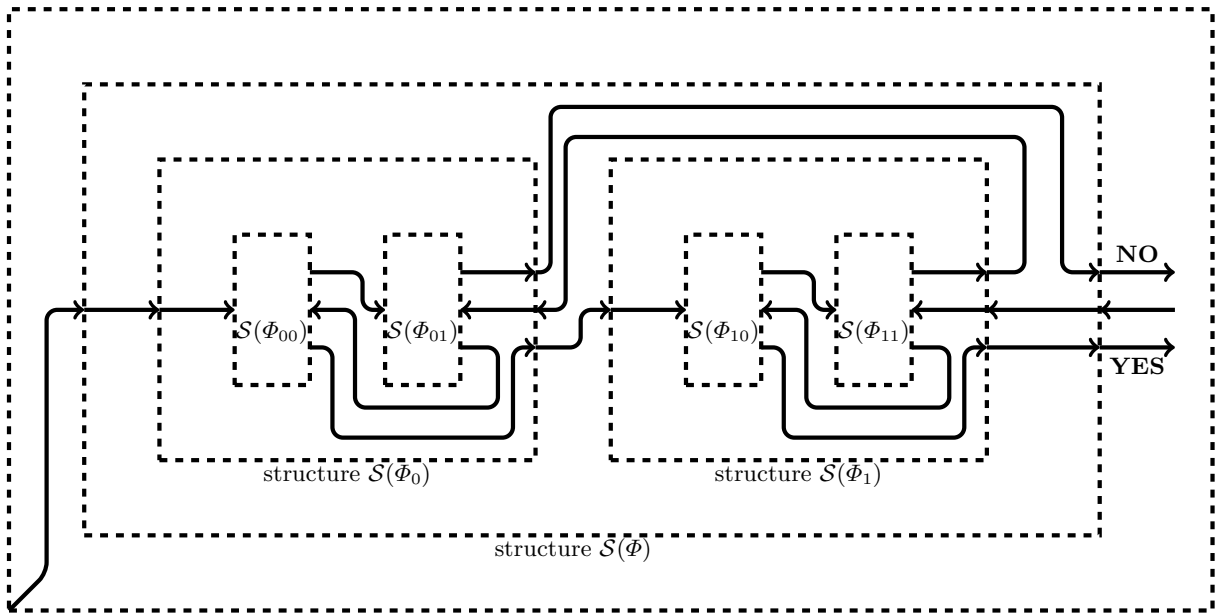

\FloatBarrier 

\section{Conclusion and Further Work}

With our current state of knowledge, it may be that for every \ppad-complete problem $X$,
and every path-following algorithm ${\cal A}$ for $X$, it is \pspace-complete to
compute the output of ${\cal A}$ on instances of $X$.
The reason to address this question is that it may shed light on the nature
of \ppad-completeness, perhaps helping to relate \ppad\ to other complexity classes.
We cannot go further and claim that it may be \pspace-complete to compute the
output of any exponential-time algorithm, since the output of lexicographic search has
complexity below \pspace.

A straightforward corollary of our main result is that if we are allowed to start at any
point on the boundary where two colours meet, and trace the path that begins there
until the third colour is reached, that the problem remains \pspace-complete.
This is reminiscent of the result of~\cite{GPS11}  that the problem of computing a
Lemke-Howson solution of a game remains \pspace-complete even if the algorithm is
free to choose the initially dropped label. We conjecture that these are instances of
a more general principle.


\bibliographystyle{splncs03}


\bibliography{literature}

\end{document}